\newcommandx{\iman}[2][1=]{\todo[linecolor=blue,backgroundcolor=blue!25,bordercolor=blue,author=Iman,#1]{#2}}
\title{\LARGE \bf
Generating Minimum-Snap Quadrotor Trajectories Really Fast
}
\author{Declan Burke, Airlie Chapman, Iman Shames% <-this % stops a space
% \thanks{*This work was not supported by any organization}% <-this % stops a space
\thanks{The first and second authors are with the Department of Mechanical Engineering and third with the Department of Electrical and Electronic Engineering, The University of Melbourne, VIC, 3010 Australia. Emails: {\tt\small \{ declanb@student., airlie.chapman@, iman.shames@\}unimelb.edu.au.}
This work is partially supported by Defence Science and Technology Group, through agreement MyIP: ID9156 entitled ``Verifiable Hierarchical Sensing, Planning and Control'', the Australian Government, via grant AUSMURIB000001 associated with ONR MURI grant N00014-19-1-2571.}}
\begin{document}

\newtheorem{lemm}{\bf Lemma}
\renewcommand{\thelemm}{\arabic{lemm}}
\newtheorem{prop}[lemm]{\bf Proposition}
\renewcommand{\theprop}{\arabic{lemm}}
\newtheorem{prob}[lemm]{\bf Problem}
\renewcommand{\theprob}{\arabic{lemm}}
\newtheorem{algo}[lemm]{\bf Algorithm}
\renewcommand{\thealgo}{\arabic{lemm}}
\newtheorem{rema}[lemm]{\bf Remark}
\renewcommand{\therema}{\arabic{lemm}}

\renewcommand{\thetable}{\arabic{table}}  

\maketitle
\thispagestyle{empty}
\pagestyle{empty}

%%%%%%%%%%%%%%%%%%%%%%%%%%%%%%%%%%%%%%%%%%%%%%%%%%%%%%%%%%%%%%%%%%%%%%%%%%%%%%%%
\begin{abstract}
We propose an algorithm for generating minimum-snap trajectories for quadrotors with linear computational complexity with respect to the number of segments in the spline trajectory. Our algorithm is numerically stable for large numbers of segments and is able to generate trajectories of more than $500,000$ segments. The computational speed and numerical stability of our algorithm makes it suitable for real-time generation of very large scale trajectories. We demonstrate the performance of our algorithm and compare it to existing methods, in which it is both faster and able to calculate larger trajectories than state-of-the-art. We also show the feasibility of the trajectories experimentally with a long quadrotor flight.
\end{abstract}

%%%%%%%%%%%%%%%%%%%%%%%%%%%%%%%%%%%%%%%%%%%%%%%%%%%%%%%%%%%%%%%%%%%%%%%%%%%%%%%%
\section{INTRODUCTION}
Quadrotor trajectory optimization for has been extensively studied with a variety of approaches suggested \cite{Gao2019}. Mellinger and Kumar \cite{Mellinger2011} pioneered minimum-snap trajectory generation for quadrotors. They used smooth trajectories to control quadrotors and formulated a quadratic program (QP) to calculate continuous splines. A method of solving this QP was proposed by Richter {\it et al.}~\cite{Richter2016}, who also included a procedure for making trajectories safer. Mellinger {\it et al.}~\cite{Mellinger2012} and Deits and Tedrake \cite{Deits2015} further developed the use of minimum-snap splines and incorporated collision avoidance by formulating mixed-integer programs to generate trajectories. Another approach to trajectory generation is in the use of motion primitives generated by solving optimal control problems, often to minimize the snap of the primitive. Lui {\it et al.}~\cite{Liu2017} discretize the state space using primitives and then perform a graph search, while Muller {\it et al.}~\cite{Mueller2015} evaluate many primitives at each controller update step and execute the feasible primitive with the lowest cost.

% In this paper, we revisit the equality constrained, QP formulation of \cite{Mellinger2011, Richter2016} used to generate a minimum snap trajectory\footnote{The authors note that an inequality constrained QP should instead be employed to introduce position, velocity and acceleration constraints useful for obstacle avoidance. See \cite{de2017new} for a problem formulation and further details.}. The aforementioned QP formulation leads to conceptually simple optimisation problems. However, solving the resultant optimization program can be challenging in practice. Richter {\it et al}. \cite{Richter2016} note numerical instability in using constrained optimisation methods to solve for trajectories of more than four segments. Their solution allows for calculating trajectories of more than $50$ segments that exhibits better empirical numerical stability than other existing methods. However, no analysis for this apparent improvement in numerical stability is provided beyond the presented numerical examples. Further, other properties of solutions to minimum-snap trajectory generation problems such as their computational complexity remain unexplored. This solution does not scale well with the number of segments in the spline and numerically becomes unstable for trajectories of long flight times.

In this paper, we revisit the equality constrained QP formulation of \cite{Mellinger2011, Richter2016} used to generate a minimum snap trajectory\footnote{The authors note that an inequality constrained QP should instead be employed to introduce position, velocity and acceleration constraints useful for obstacle avoidance. See \cite{de2017new} for a problem formulation and further details.}. The aforementioned QP formulation leads to conceptually simple optimisation problems. However, solving the resultant optimization program can be challenging in practice. Richter {\it et al}. \cite{Richter2016} note numerical instability in using constrained optimisation methods to solve for trajectories of more than four segments. De Almeida and Akella further investigate the numerical instability of the problem formulation and identify the ill-conditioned matrix that is the most significant source of error \cite{de2017new}. Beyond numerical stability, other properties of solutions to minimum-snap trajectory generation problems such as their computational complexity remain unexplored.

\begin{figure}[tp]
	\centering
	{\includegraphics[width=0.98\linewidth]{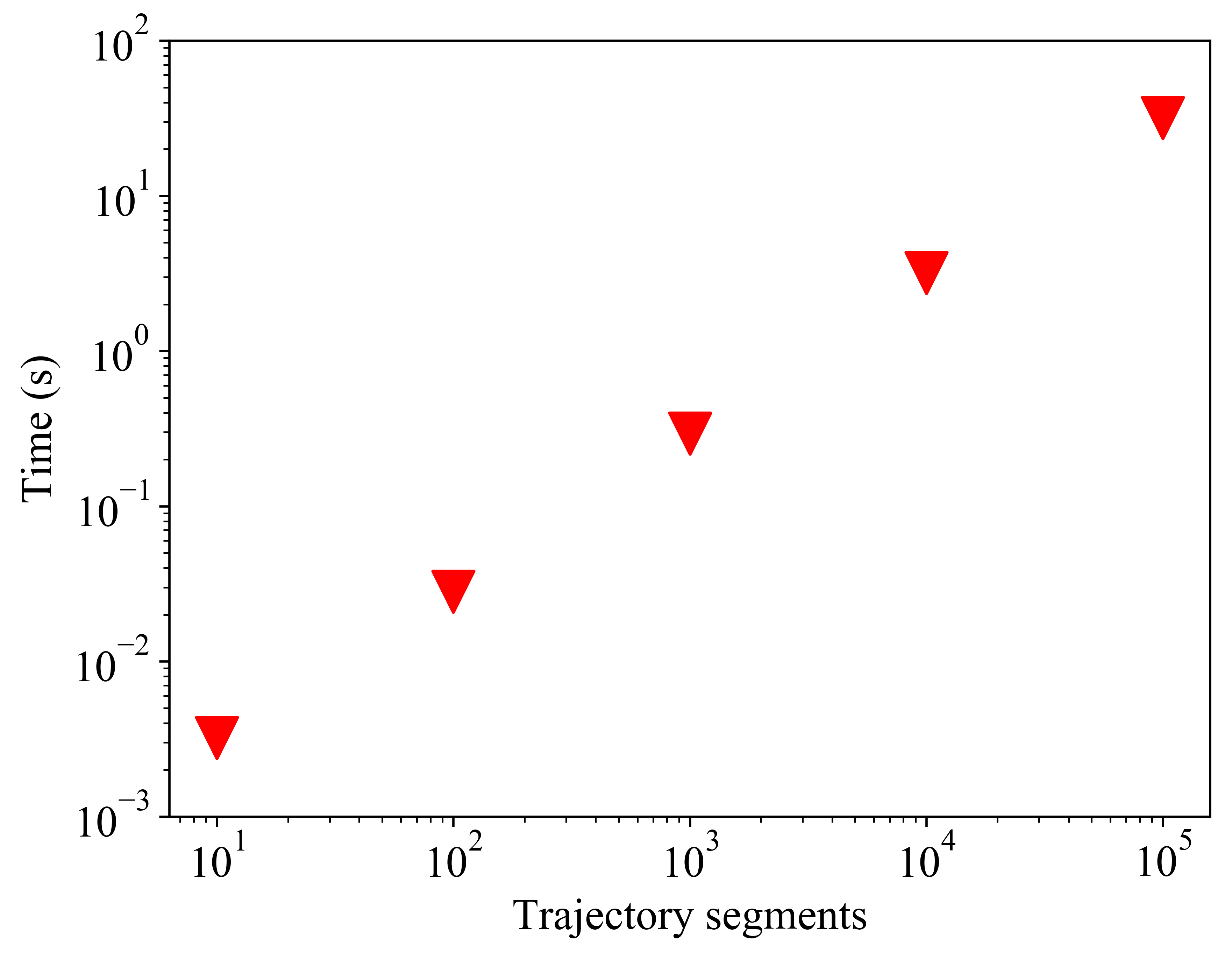}}
	\caption{The computation time required to generate minimum-snap trajectories for splines of a range of polynomial segments using the proposed algorithm (Algorithm \ref{algo1}).}
	\label{lin_comp} 
\end{figure}

In this paper, we propose a method that solves the QP problem corresponding to a minimum snap trajectory generation with a very large number of segments for long, albeit finite, flight times. We achieve this by explicitly identifying and exploiting the algebraic structure of the problem. Particularly, first, we show that our solution is of linear computational complexity in the number of segments of the trajectory. Second, inspired by the properties of confluent Vandermonde matrices we introduce a new change of variable to arrive at an equivalent yet better conditioned problem. Our work differs from that of \cite{de2017new} as we address the numerical instability of the original QP formulation of \cite{Mellinger2011}, \cite{Richter2016} rather than proposing a new model for minimising the snap of the trajectory. 

In summary, our algorithm solves problems many times larger than state-of-the-art methods, demonstrating with trajectories of more than $500,000$ segments (see Fig. \ref{lin_comp}). 

The paper's outline is as follows. In the next section, we introduce notations along with the minimum snap problem. In Section \ref{sec::structure}, the method for solving the minimum snap problem is described. In Section \ref{sec::condition}, we study the conditioning problems inherent to the vanilla formulation of the minimum snap problem and propose a change of variable that alleviate this issue. Experimental results are presented in Section \ref{sec::experiments}. Concluding remarks come in the end.

% We also include a discussion, comparing this to existing methods.

% Our second contribution is a reformulation of the original program to improve the conditioning of the problem. We identify the source of numerical instability in current methods and recast the QP to a nondimensional form to alleviate this issue. Under this reformulation, 

\section{THE MINIMUM SNAP PROBLEM}

\subsection{Notation}
    % We parameterize polynomials with respect to time $t$. When discretizing time, we consider only nondecreasing times represented by the vector $T=[t_0,\ldots,t_k]^T\in\mathbb{R}^k$, such that $t_{j-1}\leq t_j$ for $j=1,\ldots,k$. 

    % We represent polynomials $\pi_i(t;p_i):\mathbb{R}\to \mathbb{R}$ in terms of the vectors of coefficients $p_i=[p_{0,i},\ldots,p_{n,i}]^T$. For equations involving many polynomials $p_i$, for $i=1,\ldots,k$, we stack the vectors as $p=[p_1^T,\ldots,p_k^T]^T\in\mathbb{R}^{kn}$.
    
    % Derivatives are represented as scalar values sampled at particular times, where the $j$ derivative of the spline $\pi_i(t;p_i)$ sampled at its endpoints at time $t$ is $d_{i,j,t}\in\mathbb{R}$. They are also represented as vectors of the first $s$ derivatives $d_{i,t}=[d_{i,0,t}^T,\ldots,d_{i,s,t}^T]^T\in\mathbb{R}^{2s}$. For equations involving many derivative values $d_{i,t}$ at $t_{i-1}$ and $t_i$, for $i=1,\ldots,k$, we write $d=[d_{1,t_0}^T,d_{1,t_1}^T,\ldots,d_{k,t_{k-1}}^T,d_{k,t_k}^T]^T\in\mathbb{R}^{2sk}$.

    We use the following pieces of notation throughout the paper. We write the dot product as $\cdot$, and the identity matrix as $I_s\in\mathbb{R}^{s\times s}$. The condition number of a matrix $V$ is defined $\kappa(V)=\lVert V\rVert \lVert V^{-1}\rVert$. The cardinality of a set $\Xi$ is written as $|\Xi|$. % We abbreviate $i=1,\ldots,k$ with $i\in[k]$; \hl{how to notate decreasing order}. 
    A function $\pi$ of $t$ that is parameterised by $p$ is written $\pi(t;p)$. We use the order notation for the sequences $\{\eta_i\}$ and $\{\nu_i\}$, to write $\eta_i = O(\nu_i)$ if there exists a positive constant $C\in\mathbb{R}$ such that $|\eta_i|\leq C|\nu_i|$ for $i$ sufficiently large. Given a sequence of column vectors $\{x_i\}$ and matrices $\{V_i\}$ for $i=1,\ldots,k$, let
    \begin{align*}
        \mathrm{vec}(\{x_i\})\coloneqq\begin{bmatrix}
        x_1 \\ \vdots \\ x_k
        \end{bmatrix},~\mathrm{diag}(\{V_i\})\coloneqq\begin{bmatrix}
        V_1 &  &  \\ & \ddots & \\ & & V_k
        \end{bmatrix}.
    \end{align*}
    
    \subsection{Formulating an Optimization Program}
    Mellinger and Kumar established that a smooth trajectory in terms of position $[x,y,z]^T\in\mathbb{R}^3$ and yaw $\psi\in SO(2)$ can be used to calculate motor commands for a quadrotor, i.e., the system's control inputs \cite{Mellinger2011}. Their method of generating this trajectory involves formulating and solving four QPs. The QPs are of the same form and they may be solved separately, so we consider the general problem of generating a spline $\pi(t)$ that can represent any of $x(t),y(t),z(t)$ or $\psi(t)$\footnote{The minimum snap objective applies only to $\pi_i(t;p_i)=x_i(t),~y_i(t)$ and $z_i(t)$. For $\pi_i(t;p_i)=\psi_i(t)$, the integrand in objective is $[\frac{d^2p_i}{dt^2}(t)]^2$ (see \cite{Mellinger2011}).}. % For further details on quadrotor dynamics and controlling to minimum-snap trajectories, we suggest the reader consult the original paper \cite{Mellinger2011}.
    
    We now introduce the representation of the system used to formulate minimum-snap trajectory generation as an optimization program. We consider nondecreasing times as $T=[t_0,\ldots,t_k]^T\in\mathbb{R}^{k+1}$ such that $t_{i-1}\leq t_i$ for $i=1,\ldots,k$. The conventional choice for the trajectory is a continuous spline $\pi(t;p):[t_0,t_k]\to \mathbb{R}$ of $k\geq1$, $k$ integer, segments
    \begin{align*}
        \pi(t;p)&=\begin{cases}
        \pi_1(t;p_i) & t_0\leq t<t_1, \\
        \quad \vdots & \\
        \pi_k(t;p_k) & t_{k-1}\leq t<t_k,
        \end{cases}
    \end{align*}
    where each segment $\pi_i(t;p_i):[t_{i-1},t_i]\to \mathbb{R}$ is a polynomial of order $n-1$, where $n$ is a positive integer. We represent the segments as $\pi_i(t;p_i)=[1,t,\ldots,t^{n-1}] p_i$, where $p_i=[p_{0,i},\ldots,p_{n-1,i}]^T$ for all $i=1,\ldots,k$, and the entire spline as $p=\mathrm{vec}(\{p_i\})\in\mathbb{R}^{kn},~i=1,\ldots,k$.
    
    Constraints on the trajectory $\pi(t;p)$ are introduced by setting the value of the segments $\pi_i(t;p_i)$ and their derivatives at $t_{i-1}$ and $t_i$, $i=1,\ldots,k$. The value of $j$th derivative of the segment $\pi_i(t;p_i)$ at time $t$ is the scalar $\delta_{i,j,t}$, and we consider up to the $(s-1)$th derivative, where $s$ is a positive integer. We represent the first $s-1$ derivatives as the vector $d_{i,t}=\mathrm{vec}(\{\delta_{i,j,t}\})\in\mathbb{R}^{s},~j=0,\ldots,s-1$, and for many derivative values $d_{i,t}$ at $t_{i-1}$ and $t_i$, % for $i=1,\ldots,k$, 
    we write the vector $d=\mathrm{vec}(\{[d_{i,t_{i-1}}^T,d_{i,t_i}^T]^T\})\in\mathbb{R}^{2sk}$ for $i=1,\ldots,k$. We require continuity between segments, so we equate $d_{i,t_i}=d_{i+1,t_i}$ for all $i=1,\ldots,k-1$. Further constraints can be introduced with the set $\Xi=\{(i,j,t)\}$, if $\kappa\in\Xi$ then $\delta_\kappa=\beta_\kappa$ for some prescribed derivative $\beta_\kappa\in\mathbb{R}$. There is a total of $|\Xi|=2sk-m$ constraints defined by $\Xi$, where $m=\sum_{i=1}^{k}m_{i,t_{i-1}}+m_{i,t_i}$ and $m_{i,t_{i-1}}$ and $m_{i,t_i}$ are the number of derivative values of the polynomial $\pi_i(t;p_i)$ without specified values at $t_{i-1}$ and $t_i$, respectively. % Note, $m_{i,t}\leq s$ for $i =1,\ldots,k,~t=t_{i-1},t_i$.

    % With this representation of polynomials and their derivatives, we formally introduce the minimum-snap trajectory generation problem.

% \begin{prob}\label{prog1}

As shown in \cite{Mellinger2012,Richter2016}, the minimum-snap trajectory $\pi(t;p^\star)$ can be generated via solving the following QP for the minimizing arguments $(p^\star,d^\star)$ of
\begin{subequations} \label{form1}
    \begin{align}
        \arg\min_{p, d} \quad &p^T Q(T) p,  \label{form1:cost} \\
        \text{subject to} \quad & A(T)p=d, \label{form1:con1} \\
        & \Lambda d =0,  \label{form1:con2} \\
        & \Pi d = \beta, \label{form1:con3}
    \end{align}
\end{subequations}
where $\beta\in\mathbb{R}^{2sk-m}$, the vector of $\beta_\kappa$ for all $\kappa\in\Xi$, and the vector of times $T=[t_0,\ldots,t_k]\in\mathbb{R}^{k+1}$. The block diagonal matrices $Q(T)=\mathrm{diag}\{Q_1(T),\ldots,Q_k(T)\}\in\mathbb{R}^{nk\times nk}$ and $A(T)=\mathrm{diag}\{A_1(T),\ldots,A_k(T)\}\in\mathbb{R}^{2sk\times nk}$ are constructed from the elements $Q_i(T)\in\mathbb{R}^{n\times n}$ and $A_i(T)\in\mathbb{R}^{2s\times n}$. The $i$th matrix $A_i(T)$ is
% \begin{align*}
%     A_i(T) &=\begin{bmatrix}
%     1 & t_{i-1} & t_{i-1}^2 & \ldots & t_{i-1}^{n-1} \\
%     0 & 1 & 2t_{i-1} & \ldots & (n-1)t_{i-1}^{n-2} \\
%     0 & 0 & 2 & \ldots & (n-1)(n-2)t_{i-1}^{n-3} \\
%     \vdots & \vdots & \vdots & \ddots &  \vdots \\ 
%     0 & 0 & 0 & \ldots & \frac{(n-1)!}{(n - s)!}t_{i-1}^{n-s} \\
%     1 & t_i & t_i^2 & \ldots &  t_i^{n-1} \\
%      \vdots & \vdots & \vdots & \ddots &  \vdots \\ 
%     0 & 0 & 0 & \ldots & \frac{(n-1)!}{(n - s)!}t_i^{n-s} \\
%     \end{bmatrix}.
% \end{align*}
\begin{align*}
    A_i(T) &=\begin{bmatrix}
    1 & t_{i-1} & t_{i-1}^2 & \ldots & t_{i-1}^{n-1} \\
    0 & 1 & 2t_{i-1} & \ldots & (n-1)t_{i-1}^{n-2} \\
    \vdots & \vdots & \vdots & \ddots &  \vdots \\ 
    0 & 0 & 0 & \ldots & \frac{(n-1)!}{(n - s)!}t_{i-1}^{n-s} \\
    1 & t_i & t_i^2 & \ldots &  t_i^{n-1} \\
     \vdots & \vdots & \vdots & \ddots &  \vdots \\ 
    0 & 0 & 0 & \ldots & \frac{(n-1)!}{(n - s)!}t_i^{n-s} \\
    \end{bmatrix}.
\end{align*}
For details regarding the construction of $Q_i(T)$, see Appendix \ref{sec:const}. The $i$th matrix $A_i(T)$ is the transpose of a confluent Vandermonde matrix, which we will revisit in detail in Section \ref{sec::condition}. A pertinent property as a confluent Vandermonde matrix is that the matrix $A_i(T)$ is nonsingular for distinct $t_i$ \cite{higham2002accuracy}.

Let $\omega_{i,t} =\{\kappa \in \Xi | i\in \Xi,~t\in\Xi\}$. The selection matrix $\Pi=\mathrm{diag}(\{\Pi_{1,t_0},\Pi_{1,t_1},\ldots, \Pi_{k,t_{k-1}}, \Pi_{k,t_k}\})\in\mathbb{R}^{(2sk-m)\times 2sk}$ is constructed from the elements $\Pi_{i,t}\in\mathbb{R}^{(s-m_{i,t})\times s}$ such that $[\Pi_{i,t}]_{h,j}=1$ for all  $\eta_v\in\omega_{i,t}$ and $h=1,\ldots,|\omega_{i,t}|$. The coupling matrix $\Lambda \in\mathbb{R}^{s(k-1)\times 2sk}$ is
\begin{align*}
    \Lambda= \begin{bmatrix}
            0 & I_s & -I_s & 0 & 0 & \hdots & 0 & 0 & 0\\
            0 & 0 & 0 & I_s & -I_s & \hdots & 0 & 0 & 0   \\
            &  & \vdots & & & \ddots & & \vdots & \\
            0 & 0 & 0 & 0 & 0 & \hdots & I_s & -I_s & 0 
            \end{bmatrix}.
\end{align*}
The decision variables in \eqref{form1} are the vectors of polynomial coefficients $p\in\mathbb{R}^{nk}$ and vectors of derivative values $d\in\mathbb{R}^{2sk}$. For brevity we will stop explicitly stating time parameterisation of matrices and will write $A$ and $Q$ instead of $A(T)$ and $Q(T)$.

% \begin{rema}
% The $i$th matrix $A_i(T)$ is the transpose of a confluent Vandermonde matrix, which we will revisit in detail in Section \ref{sec::condition}. As a confluent Vandermonde matrix, the matrix $A_i(T)$ is nonsingular for distinct $t_i$ \cite{higham2002accuracy}.
% \end{rema}

\begin{prob}\label{prog1}
Generate a minimum snap spline $\pi(t;p^\star)$ with $k$ polynomial segments of order $n-1$ where $p^\star$ is obtained by solving \eqref{form1} using an algorithm with computational complexity of $O(kn^3)$. 
\end{prob}
% \vspace{1em}
In the next section, we propose a solution to Problem~\ref{prog1}. Later, in Section~\ref{sec::condition}, we further introduce a modification that considerably improves the numerical stability of the proposed solution. 

% \begin{algo}
% The following algorithm solves Program \ref{prog1}.
% \end{algo}

\section{A STRUCTURED SOLUTION}\label{sec::structure}
   \begin{algorithm}[t]
        \For{i=1,\ldots, k} {
            construct $A_i,Q_i,\Sigma_i,b_{i,t_{i-1}}, b_{i,t_i}$; \\
            partition $B_i,C_i,D_i,g_{i,t_{i-1}}, g_{i,t_i}$; \\
            $ \overline{B}_i {\gets\begin{cases}
            B_1 & i=1, \\
            B_{i}+D_{i-1}-C_{i-1}^TB_{i-1}^{-1}C_{i-1} & i=2,\ldots,k;
            \end{cases}}$ 
            \newline $\overline{g}_{i,t_{i-1}} {\gets \begin{cases}
            {g}_{1,t_{i-1}} & i=1, \\
            \begin{aligned}
            & g_{i,t_{i-1}} + g_{i-1,t_{i-2}} \\
            & \quad -C_{i-1}^TB_{i-1}^{-1}\overline{g}_{i-1,t_{i-2}}
            \end{aligned}
             & i=2,\ldots,k;
            \end{cases}}$
        }
        \For{i=k,\ldots, 1} {
            $f_{i,t_{i-1}}\gets  \overline{B}^{-1}_i(\overline{g}_{i,t_{i-1}}-C_if_{i,t_{i}}) $; \\
            $f_{i,t_i} {\gets \begin{cases}
            \begin{aligned}
            & (D_k-C_k^T\overline{B}_k^{-1}C_k)^{-1}  \\
            & \quad (g_{k,t_{k}}-C_k^T\overline{B}_k^{-1}\overline{g}_{k,t_{k-1}})
            \end{aligned} 
            & i=k, \\
            f_{i+1,t_{i-1}} & i=k-1,\ldots,1;
            \end{cases}}$ \\
            solve for $d_{i,t_{i-1}}$ and $d_{i,t_i}$; \\
            % Can use divided differences to solve for $p_i$ here.
            solve for $p_i$;
        }
    
     \caption{Algorithm of Linear Computational Complexity for Solving Problem \ref{prog1}.} \label{algo1}
     \end{algorithm}

\subsection{The Algorithm}
We begin by presenting our algorithm for solving Problem \ref{prog1}. We start by introducing the variables required by the algorithm's calculations. Let
\begin{align}\label{proj1}
    d &= b + \Sigma f,
\end{align}
where $b\in \mathbb{R}^{2sk}$, $f\in \mathbb{R}^{m}$  and $\Sigma\in \mathbb{R}^{2sk \times m}$ is such that $\Pi b=\beta$ and  $\Pi\Sigma=0$. We note that $b$ is constant in the sense its entries are either the parameters $\beta_k$, for $k\in\Xi$, or zeros. We partition $b,~f$ and $\Sigma$ into vectors and matrices of size  $b_{i,t}\in \mathbb{R}^{s}$, $f_{i,t}\in \mathbb{R}^{m_{i,t}}$ and $\Sigma_{i,t}\in \mathbb{R}^{s \times m_{i,t}}$, for $i=1,\ldots,k$ and $t=t_{i-1},t_i$, such that $b=\mathrm{vec}(\{[b_{i,t_{i-1}}^T,b_{i,t_i}^T]^T\})\in\mathbb{R}^{2sk}$ and $f=\mathrm{vec}(\{[f_{i,t_{i-1}}^T,f_{i,t_i}^T]^T\})\in\mathbb{R}^{m}$ for $i=1,\ldots,k$. Further, let $\Sigma_i=\mathrm{diag}(\{\Sigma_{i,t_{i-1}},\Sigma_{i,t_i}\})$ such that   $\Sigma=\mathrm{diag}(\{\Sigma_{i}\})\in\mathbb{R}^{2sk\times m}$ for $i=1,\ldots,k$.

For $i=1,\ldots,k$, let
\begin{subequations}\label{partition}
    \begin{align}
    \Sigma_i^TA_i^{-T}Q_iA_i^{-1}\Sigma_i &= \begin{bmatrix}
    B_i & C_i \\
    C_i^T & D_i
    \end{bmatrix}, \\
    \Sigma_i^TA_i^{-T}Q_iA_i^{-1} \begin{bmatrix}
    b_{i,t_{i-1}} \\
    b_{i,t_i}
    \end{bmatrix}&= \begin{bmatrix}
    g_{i,t_{i-1}} \\
    g_{i,t_i}
    \end{bmatrix}
\end{align}
\end{subequations}
where $B_i\in\mathbb{R}^{m_{i,t_{i-1}}\times m_{i,t_{i-1}}}$, $C_i\in\mathbb{R}^{m_{i,t_{i-1}}\times m_{i,t_i}}$, $D_i\in\mathbb{R}^{m_{i,t_i}\times m_{i,t_i}}$ and $g_{i,t}\in\mathbb{R}^{m_{i,t}}$ for $t\in \{t_{i-1},t_i\}$. 
\begin{prop}\label{prop:kn3}
Algorithm \ref{algo1} solves Problem \ref{prog1} in $O(kn^3)$. 
\end{prop}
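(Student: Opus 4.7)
The plan is to show that Algorithm \ref{algo1} is an instance of a block Thomas (block LU / Schur complement) elimination applied to a block-tridiagonal reformulation of the KKT system of \eqref{form1}, and then to bound the arithmetic cost by counting block operations of size $n$ and $s \le n$.

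First, I would use the algebraic reductions enabled by the structure of \eqref{form1}. Since each $A_i$ is a (transposed) confluent Vandermonde matrix with distinct nodes, it is nonsingular, so the equality constraint \eqref{form1:con1} lets us substitute $p_i = A_i^{-1} d_i$, where $d_i = [d_{i,t_{i-1}}^T, d_{i,t_i}^T]^T$. The affine parametrisation $d = b + \Sigma f$ from \eqref{proj1} eliminates \eqref{form1:con3} since $\Pi b = \beta$ and $\Pi \Sigma = 0$. After substitution, the objective decouples as
\begin{align*}
p^T Q p = \sum_{i=1}^{k} \left( f_i^T \begin{bmatrix} B_i & C_i \\ C_i^T & D_i \end{bmatrix} f_i + 2 f_i^T \begin{bmatrix} g_{i,t_{i-1}} \\ g_{i,t_i} \end{bmatrix} \right) + \text{const},
\end{align*}
with $f_i = [f_{i,t_{i-1}}^T, f_{i,t_i}^T]^T$, exactly by the partition \eqref{partition}. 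The remaining constraint $\Lambda d = 0$ enforces $d_{i,t_i} = d_{i+1,t_i}$, which, together with the block-diagonal structure of $\Sigma$ and the fact that the free components of $b$ at shared times can be made consistent, reduces to $f_{i,t_i} = f_{i+1,t_i}$ for $i=1,\ldots,k-1$.

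Next I would merge the shared variables by defining $u_0 = f_{1,t_0}$, $u_i = f_{i,t_i} = f_{i+1,t_i}$ for $i=1,\ldots,k-1$, and $u_k = f_{k,t_k}$. The cost becomes a quadratic form in $u = \mathrm{vec}(\{u_i\})$ whose Hessian is block-tridiagonal: the diagonal block at index $i\in\{1,\ldots,k-1\}$ is $D_i + B_{i+1}$, and the off-diagonal blocks are $C_{i+1}$ and $C_{i+1}^T$. Setting the gradient to zero gives a symmetric block-tridiagonal linear system in $u$. The quantities $\overline{B}_i$ and $\overline{g}_{i,t_{i-1}}$ in Algorithm \ref{algo1} are exactly the pivot block and reduced right-hand side produced by one step of block Gaussian elimination applied to this system, i.e.\ the Schur complement of the leading block. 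I would verify this identification by induction on $i$, checking that the recurrences
\begin{align*}
\overline{B}_i &= B_i + D_{i-1} - C_{i-1}^T B_{i-1}^{-1} C_{i-1}, \\
\overline{g}_{i,t_{i-1}} &= g_{i,t_{i-1}} + g_{i-1,t_{i-2}} - C_{i-1}^T B_{i-1}^{-1} \overline{g}_{i-1,t_{i-2}},
\end{align*}
match the standard block Thomas forward sweep, so the backward sweep in Algorithm \ref{algo1} produces the unique minimiser $u^\star$ and hence $f^\star$. Recovery of $d_i^\star = b_i + \Sigma_i f_i^\star$ and then $p_i^\star = A_i^{-1} d_i^\star$ completes the solution.

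Finally I would tally the cost. Each of the $k$ iterations of the forward loop constructs $A_i, Q_i, \Sigma_i$, forms the products in \eqref{partition}, and computes one Schur complement; the dominant step is the $n\times n$ (or smaller, since $m_{i,t}\le s\le n$) inversion and matrix-matrix multiplication, costing $O(n^3)$ per segment. The backward loop likewise performs one $O(n^3)$ solve per segment, including a single terminal solve against $D_k - C_k^T \overline{B}_k^{-1} C_k$, and the final reconstruction $p_i = A_i^{-1} d_i$ is $O(n^3)$ per segment. Summing over $i=1,\ldots,k$ yields total complexity $O(k n^3)$, establishing the claim. The main obstacle I anticipate is the bookkeeping in the second step: verifying carefully that the coupling $\Lambda d = 0$ combined with $d = b + \Sigma f$ really does collapse to the simple identification $f_{i,t_i} = f_{i+1,t_i}$ (so that $b$'s contribution aggregates cleanly into $\overline{g}_{i,t_{i-1}}$), and that the indexing of $\overline{B}_i, \overline{g}_{i,t_{i-1}}, C_i, D_i$ in the algorithm's sweeps matches the block Thomas recursion for the merged-variable system described above.
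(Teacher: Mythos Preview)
Your proposal is correct and follows essentially the same blueprint as the paper's proof: substitute $p=A^{-1}d$, parametrise $d=b+\Sigma f$, expose a block-tridiagonal linear system, and solve it by block LU/Thomas in $O(k)$ block operations. The one substantive difference is in how the coupling constraint is handled. The paper keeps $\Lambda\Sigma f=0$ as an explicit constraint, writes the KKT system with Lagrange multipliers $\lambda_i$, and permutes the unknowns into $3\times 3$ super-blocks $R_i$ (containing $f_{i,t_{i-1}},f_{i,t_i},\lambda_i$) whose block LU factorisation yields the recursions of Algorithm~\ref{algo1} directly. You instead observe that $\Lambda\Sigma f=0$ collapses to $f_{i,t_i}=f_{i+1,t_i}$, merge the shared variables into a single $u$, and obtain an \emph{unconstrained} quadratic whose Hessian is already block-tridiagonal. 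Your route is a little more elementary (no multipliers, smaller blocks), while the paper's route makes the identification with the exact recurrences in Algorithm~\ref{algo1} more mechanical, since the identity and zero sub-blocks of $R_i$ are what generate the particular Schur-complement expressions appearing there. The ``bookkeeping obstacle'' you flag is real but benign: the paper itself implicitly assumes $\Lambda b=0$ (consistent fixed data at shared nodes, so that $\Sigma_{i,t_i}=\Sigma_{i+1,t_i}$ and $b_{i,t_i}=b_{i+1,t_i}$) when it replaces $\Lambda d=0$ by $\Lambda\Sigma f=0$, which is precisely the hypothesis you need. Finally, your $O(kn^3)$ tally is more complete than the paper's: the paper's proof counts only the $O(ks^3)$ Thomas sweep and does not explicitly charge the per-segment $O(n^3)$ cost of forming $A_i^{-1}$ and the products in \eqref{partition}, which you correctly include.
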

\begin{proof}
See Appendix \ref{sec:proof}.
\end{proof}

% We postpone the proof of Proposition~\ref{prop:kn3} to the end of this section.  % Cast as \eqref{form1}, the program can be solved using a range of constrained optimisation methods, including interior point methods. However, noting its poor performance for trajectories of more than four segements \cite{Richter2016}, we strive for a different approach.

% To this end, for $i=1,\ldots,k$ and $t=t_{i-1},t_i$, let  
% \begin{align}\label{proj1}
%     d_{i,t} &= b_{i,t} + \Sigma_{i,t} f_{i,t},
% \end{align}
% where $f_{i,t}\in \mathbb{R}^{m_{i,t}}$, $\Sigma_{i,t}\in \mathbb{R}^{s \times m_{i,t}}$ and $b_{i,t}\in \mathbb{R}^{s}$ are such that, for $\Sigma=\mathrm{diag}(\{\Sigma_{1,t_0},\Sigma_{1,t_1},\ldots, \Sigma_{k,t_{k-1}}, \Sigma_{k,t_k}\})\in\mathbb{R}^{kn\times m}$ and $b=\mathrm{vec}(\{b_{1,t_0},b_{1,t_1},\ldots, b_{k,t_{k-1}}, b_{k,t_k}\})\in\mathbb{R}^{2sk}$, $\Pi b=\beta$ and $\Pi\Sigma=0$.
    
    % Note, in the solution to $Ux=y$, the values of the Lagrangian multipliers $\lambda_i$ are set by

\begin{rema}
    We note that in practice $s$ is usually chosen to be $5$, such that continuity is enforced up to the snap of position \cite{Mellinger2011}. The number of segments $k$ however depends on the size of the trajectory and can be very large. This is then an improvement over other methods for solving Problem \ref{prog1}, which typically have a cubic computation complexity in $k$. The upper bound on the size of the equations \eqref{KKT} is $kn+(k-1)s$. Solving with Gaussian Elimination would then require $O(\frac{2}{3}k^3(n^3+s^3))$ operations. % scaling poorly in the number of splines $k$, their order $n$ and the degree of continuity enforced $s$.
    The best current method, proposed by Richter {\it et al.} \cite{Richter2016}, includes the inversion of a matrix of size $m$, and thereby needs $O(2k^3n^3)$ operations. % In practical implementation, choosing $s=4$ and $n=9$ ensures continuity to the fourth derivative and provides enough degrees of freedom to find feasible solutions. To the contrary, the number of splines $k$ is unbounded, so, in terms of computation complexity, improving scaling with respect to $k$ is the most important aspect of a Minimum Snap solution. 
    Linear in $k$, our approach dramatically reduces the computation time required to solve Problem \ref{prog1}, particularly for large trajectories.
    \end{rema}

\section{IMPROVING CONDITIONING}\label{sec::condition}

    We now present a reformulation that results in a QP with well-conditioned matrices. % This reformulation along with Algorithm \ref{algo1} result in a methodology for solving Problem \ref{prog1} for trajectories with many segments (see Remark \ref{rem:synergy} for more information). 
    To motivate this reformulation, we take a short detour to explore Vandermonde matrices. The $A_i$ are transposed confluent Vandermonde matrices and are notoriously ill-conditioned \cite{higham2002accuracy}. These matrices are ultimately the culprits behind the numerical instability of \eqref{form1}.
    
    A Vandermonde matrix is defined by the scalars $z_0,z_1,\ldots,z_n\in \mathbb{C}$ as
    \begin{align*}
        V(z_0,z_1,\ldots,z_n)=\begin{bmatrix}
        1 & 1 & \ldots & 1 \\
        z_0 & z_1 & \ldots & z_n \\
        \vdots & \vdots & & \vdots \\ 
        z_0^n & z_1^n & \ldots &  z_n^n
        \end{bmatrix}\in \mathbb{C}^{(n+1)\times(n+1)}.
    \end{align*}
    One way of generalising the standard Vandermonde matrix is to allow confluency, that is, including columns with elements that are the differentiated elements of other columns. % An example of a confluent Vandermonde matrix (with real elements) is the matrix $A_i^T$ of \eqref{form1}.
    %% Original presentation
    % \begin{align*} \label{eq::A_vand}
    %     A_i^T &=\begin{bmatrix}
    %     1 & 0 & \ldots & 0 & 1 & \ldots & 0 \\
    %     t_{i-1} & 1 & \ldots & 0 & t_i & \ldots & 0 \\
    %     t_{i-1}^2 & 2t_{i-1} & \ldots & 0 & t_i^2 & \ldots & 0 \\
    %     \vdots & \vdots & & \vdots &  \vdots & & \vdots \\ 
    %     t_{i-1}^n & nt_{i-1}^{n-1} & \ldots & \frac{n!}{s!}t_{i-1}^{n-s} & t_i^n & \ldots & \frac{n!}{s!}t_i^{n-s}
    %     \end{bmatrix}.
    % \end{align*}
    % \begin{align} \label{eq::A_vand}
    %     A_i^T &=\begin{bmatrix}
    %     1 & 0 & \ldots & 0 & 1 & \ldots \\
    %     t_{i-1} & 1 & \ldots & 0 & t_i & \ldots \\
    %     t_{i-1}^2 & 2t_{i-1} & \ldots & 0 & t_i^2 & \ldots \\
    %     \vdots & \vdots & & \vdots &  \vdots & \\ 
    %     t_{i-1}^n & nt_{i-1}^{n-1} & \ldots & \frac{n!}{s!}t_{i-1}^{n-s} & t_i^n & \ldots
    %     \end{bmatrix}.
    % \end{align}
    % \begin{align} \label{eq::A_vand}
    %     A_i^T &=\begin{bmatrix}
    %     1 &\ldots & 0 & 1 & \ldots & 0 \\
    %     t_{i-1} & \ldots & 0 & t_i & \ldots & 0 \\
    %     t_{i-1}^2 & \ldots & 0 & t_i^2 & \ldots & 0 \\
    %     \vdots & & \vdots &  \vdots & & \vdots \\ 
    %     t_{i-1}^n & \ldots & \frac{n!}{s!}t_{i-1}^{n-s} & t_i^n & \ldots & \frac{n!}{s!}t_i^{n-s}
    %     \end{bmatrix}.
    % \end{align}
    One strategy to improve the conditioning of $V$ is to carefully select the points $z_j$, for $j=0,\ldots,n$. Ideally, $V$ is perfectly conditioned, i.e., $\kappa(V)=1$, which occurs when the points are roots of unity $z_j=\mathrm{exp}(\frac{2\pi \bf{i}}{n}j)$ (here $\bf{i}$ is the imaginary unit), for $j=0,\ldots,n-1$ \cite{Gautschi1974}. Taking the points of a confluent Vandermonde matrix to be roots of unity has also been shown to improve its condition number \cite{gautschi1978inverses}.
    
    % In the context of Problem \ref{prog1}, we cannot improve the condition number of the $A_i$ by choosing complex points as $t_i\in\mathbb{R}$ for $i=0,\ldots,k$. We can however scale the problem such that all the $A_i$ and $Q_i$ have the nondimensional points $t_{i-1}=-1$ and $t_i=1$ (with associated matrices $A_{\pm 1}$ and $Q_{\pm 1}$). The following program, which is similar to \eqref{form1}, uses the matrices $A_{\pm 1}$ and $Q_{\pm 1}$. We will show the solutions to \eqref{form3} can be used to solve Problem \ref{prog1}, and hence we may use a program with better conditioned matrices. The nondimensional program is
    To formulate a program with better conditioned matrices than \eqref{form1}, we thus scale the problem such that all the $A_i$ and $Q_i$ have the nondimensional points $t_{i-1}=-1$ and $t_i=1$ (with associated matrices $A_{\pm 1}$ and $Q_{\pm 1}$). The nondimensional program is
    \begin{subequations} \label{form3}
        \begin{align}
            \min_{\overline{p},d} \quad &\overline{p}^T\overline{Q}\overline{p}, \label{form3:cost} \\
            \text{s.t.} \quad & \overline{A}\overline{p}=d, \label{form3:con1}\\
            & \Lambda d =0, \label{form3:con2}\\
            & \Pi d = \beta,\label{form3:con3}
        \end{align}
    \end{subequations}
    where $\overline{A}=\mathrm{diag}\{\Gamma_1 A_{\pm1},\ldots,\Gamma_k A_{\pm1}\}\in\mathbb{R}^{2sk\times nk}$ and $\overline{Q}=\mathrm{diag}\{Q_{\pm1},\ldots,Q_{\pm1}\}\in\mathbb{R}^{nk\times nk}$ are constructed from elements $\Gamma_i A_{\pm1}\in\mathbb{R}^{n\times n}$ and $Q_{\pm1}\in\mathbb{R}^{n\times n}$. The matrix $\Gamma_i=\mathrm{diag}\{\Delta_i^{0}, \Delta_i^{-1}, \ldots, \Delta_i^{-(s-1)}, \Delta_i^0, \ldots, \Delta_i^{-(s-1)}\}\in \mathbb{R}^{n\times n}$ scales the rows of the $A_{\pm 1}$
    with elements $\Delta_i=(t_i-t_{i-1})/{2}$ for $i=1,\ldots,k$.
    
    The following proposition is needed to show when \eqref{form3} can be used to find a solution to Problem \ref{prog1}. It considers the program for a minimum snap trajectory of a single segment in both a dimensional ($t_{i-1}\leq t <t_i$) and nondimensional form ($-1\leq t <1$). % We provide conditions for when their solutions are the same polynomials, translated and scaled. 
    
    \begin{prop}\label{prop}
        Consider the two optimisation programs %\footnote{The minimum snap objective applies only to $\pi_i(t;p_i)=x_i(t),~y_i(t)$ and $z_i(t)$. For $\pi_i(t;p_i)=\psi_i(t)$, the integrand in objective is $[\frac{d^2p_i}{dt^2}(t)]^2$ (see \cite{Mellinger2011}).}
        \begin{gather}
            \min_{\pi_i(t;p_i)} \quad \int_{t_0}^{t_1}\Big[\frac{d^4\pi_i(t;p_i)}{dt^4}\Big]^2 dt, \label{prob2} \\
            \text{s.t. } \quad \frac{d^j\pi_i(t;p_i)}{dt^j}=\delta_{i, j,t},\quad j=0,\ldots,s-1,~t=t_{i-1},t_i, \nonumber
        \intertext{and}
        	\min_{\overline{\pi}_i(t; \overline{p}_i)} \quad \int_{-1}^{1}\Big[\frac{d^4\overline{\pi}_i(t;\overline{p}_i)}{dt^4}\Big]^2 dt, \label{prob1} \\
        	\text{s.t. } \quad \frac{d^j\overline{\pi}_i(t;\overline{p}_i)}{dt^j}=\overline{\delta}_{i,j,t},\quad j=0,\ldots,s-1,~t=-1,1. \nonumber
    	\end{gather}
    	% \begin{subequations}\label{prob2}
        %     \begin{align}
        %         \min_{\pi_i(t;p_i)} \quad & \int_{t_0}^{t_1}\Big[\frac{d^4\pi_i(t;p_i)}{dt^4}\Big]^2 dt, \\
        %         \text{s.t. } \quad & \frac{d^j\pi_i(t;p_i)}{dt^j}=\delta_{i, j,t},\quad j=0,\ldots,s-1,~t=t_{i-1},t_i,
        %     \end{align}
        % \end{subequations}
        % and
        % \begin{subequations}\label{prob1}
        % 	\begin{align}
        % 	\min_{\overline{\pi}_i(t; \overline{p}_i)} \quad & \int_{-1}^{1}\Big[\frac{d^4\overline{\pi}_i(t;\overline{p}_i)}{dt^4}\Big]^2 dt, \\
        % 	\text{s.t. } \quad & \frac{d^j\overline{\pi}_i(t;\overline{p}_i)}{dt^j}=\overline{\delta}_{i,j,t},\quad j=0,\ldots,s-1,~t=-1,1.
        % 	\end{align}
        % \end{subequations}
        
        The polynomial segment $\overline{\pi}^\star_i(t;\overline{p}_i)$ solves \eqref{prob1} and $\pi^\star_i(t;p_i)=\overline{\pi}^\star_i(\Delta_i t + \Delta_i + t_{i-1}; \overline{p}_i)$ solves \eqref{prob2}, if
        \begin{subequations}\label{scaling_constraints}
            \begin{align}
                \Delta_i^{-j}\delta_{i,j,1} &= \overline{\delta}_{i,j,t_1},\hspace{1em}j=0,\ldots,s-1\\
                \Delta_i^{-j}\delta_{i,j,-1} &=\overline{\delta}_{i,j,t_0},\hspace{1em}j=0,\ldots,s-1.
            \end{align}
        \end{subequations}
    \end{prop}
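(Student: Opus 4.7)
The plan is to exhibit an invertible affine change of variables between the dimensional argument $t\in[t_{i-1},t_i]$ and the nondimensional argument $\tau\in[-1,1]$, and to check that, together with the scaling relations \eqref{scaling_constraints}, this reduces \eqref{prob2} to \eqref{prob1} up to a positive multiplicative factor in the objective. The whole argument is essentially one change-of-variables calculation; the substance lies in keeping the $\Delta_i$ factors straight across the Jacobian and the repeated chain rule.

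Concretely, I would set $\tau=(t-t_{i-1}-\Delta_i)/\Delta_i$, so that the inverse map $t=\Delta_i\tau+\Delta_i+t_{i-1}$ sends $[-1,1]$ bijectively onto $[t_{i-1},t_i]$, and I would define $\overline{\pi}_i(\tau;\overline{p}_i)=\pi_i(\Delta_i\tau+\Delta_i+t_{i-1};p_i)$. Because the map is affine, $\overline{\pi}_i$ is a polynomial in $\tau$ of the same degree as $\pi_i$ in $t$, so this substitution gives a bijection between the feasible polynomial families of the two programs.

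Next, applying the chain rule $j$ times gives $\tfrac{d^j\overline{\pi}_i}{d\tau^j}=\Delta_i^{\,j}\tfrac{d^j\pi_i}{dt^j}$. Evaluating at $\tau=\pm 1$ (equivalently $t=t_{i-1},t_i$), the identities \eqref{scaling_constraints} are exactly what is required so that the nondimensional boundary constraints hold if and only if the corresponding dimensional ones do. For the objective, the chain rule at $j=4$ combined with $dt=\Delta_i\,d\tau$ yields
\begin{equation*}
    \int_{t_{i-1}}^{t_i}\Big[\tfrac{d^4\pi_i}{dt^4}\Big]^2\,dt=\Delta_i^{-7}\int_{-1}^{1}\Big[\tfrac{d^4\overline{\pi}_i}{d\tau^4}\Big]^2\,d\tau,
\end{equation*}
and since $\Delta_i^{-7}>0$ the minimisers of the two integrals correspond under the bijection.

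Combining these pieces, if $\overline{\pi}^\star_i(\tau;\overline{p}_i)$ solves \eqref{prob1}, its dimensional pullback through $t\mapsto\tau$ is feasible for \eqref{prob2} and attains the (scaled) minimum objective, hence solves \eqref{prob2}, which is the claimed relation. The only real obstacle is the book-keeping of $\Delta_i$ powers through the Jacobian and the $j$-fold derivatives, and the minor check that an affine reparameterisation preserves polynomial degree; once those are tracked consistently the equivalence is immediate.
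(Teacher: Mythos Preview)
Your proof is correct and follows essentially the same route as the paper: the affine change of variable $\tau=\Delta_i^{-1}(t-\Delta_i-t_{i-1})$, the chain-rule scaling of derivatives by $\Delta_i^{\,j}$, and the resulting $\Delta_i^{-7}$ factor in the objective, which leaves the minimiser unchanged. Your write-up is in fact a bit more explicit than the paper's about why the correspondence is a bijection and why the positive scalar is harmless, but the underlying argument is identical.
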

    \vspace{1em}
    \begin{proof}
        Under the change of variable $\tau=\Delta_i^{-1}(t-\Delta_i-t_{i-1})$, let $\overline{\pi}_i(\tau; \overline{p}_i)=\pi_i(\Delta_i^{-1}(\tau-\Delta_i-t_{i-1}); p_i)$. With this change of variable, \eqref{prob2} becomes % for $j= 0,\ldots,s-1$
        \begin{gather}\label{prob3}
            \min_{\overline{\pi}_i(\tau; \overline{p}_i)} \quad \Delta_i^{-7}\int_{-1}^{1}\Big[\frac{d^4\overline{\pi}_i(\tau; \overline{p}_i)}{d\tau^4}\Big]^2 d\tau, \\
            \text{s.t. } \frac{d^j\overline{\pi}_i(\tau; \overline{p}_i)}{dt^j} =\Delta_i^{-j}\delta_{i,j,\tau},~j= 0,\ldots,s-1, \tau=-1,1. \nonumber
        \end{gather}
        % \begin{subequations}\label{prob3}
            % \begin{align}
            %     \min_{\overline{\pi}_i(\tau; \overline{p}_i)} \quad & \Delta_i^{-7}\int_{-1}^{1}\Big[\frac{d^4\overline{\pi}_i(\tau; \overline{p}_i)}{d\tau^4}\Big]^2 d\tau, \\
	           % \text{s.t. } \quad &  \frac{d^j\overline{\pi}_i(\tau; \overline{p}_i)}{dt^j} =\big\Delta_i^{-j}\delta_{i,j,\tau}, \quad j= 0,\ldots,s-1, \tau=-1,1.
            % \end{align}
        % \end{subequations}
        The minimization of \eqref{prob3} is independent of the constant factor. Therefore, if the constraints \eqref{scaling_constraints} are satisfied then \eqref{prob3} is equivalent to \eqref{prob1}.  If $\overline{\pi}_i(t;\overline{p}_i)$ solves \eqref{prob1} then $\overline{\pi}_i(\Delta_i t + \Delta_i + t_{i-1};\overline{p}_i)$ is the solution to \eqref{prob2}.
    \end{proof}
    
    Now we state a relationship that allows \eqref{form3} to be used to find a solution to \eqref{form1}.
    
    \begin{lemm}\label{scaled_shifted}
        Let $\overline{p}$ and $d$ solve \eqref{form3}, $p$ and $d$ solve \eqref{form1} and $\overline{\pi}_i(t;\overline{p}_i)$ and $\pi(t;p)$ be splines. Then
        \begin{align*}
        \pi(t;p)&=\begin{cases}
        \overline{\pi}_1(\Delta_1 t + \Delta_1 + t_0; \overline{p}_1) & t_0\leq t<t_1, \\
        \quad \vdots & \\
        \overline{\pi}_k(\Delta_k t + \Delta_k + t_{k-1}; \overline{p}_k) & t_{k-1}\leq t<t_k.
        \end{cases}
    \end{align*}
    \end{lemm}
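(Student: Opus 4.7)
The plan is to apply Proposition 2 segment by segment, once I have confirmed that the two QPs \eqref{form1} and \eqref{form3} share the same feasibility set in $d$ and the same optimal $d^\star$. First I would work out the change of variable $\tau = \Delta_i^{-1}(t - \Delta_i - t_{i-1})$, which maps $[t_{i-1}, t_i]$ bijectively onto $[-1, 1]$. The chain rule gives $\tfrac{d^j}{dt^j}\overline{\pi}_i(\tau(t); \overline{p}_i) = \Delta_i^{-j}\,\tfrac{d^j \overline{\pi}_i}{d\tau^j}(\tau(t))$, which is precisely the scaling encoded by $\Gamma_i$. Reading off the rows of $\Gamma_i A_{\pm 1} \overline{p}_i$, I would verify that this block of $\overline{A}\,\overline{p}$ equals the vector of dimensional derivatives $[d_{i,t_{i-1}}^T, d_{i,t_i}^T]^T$ of the polynomial $t\mapsto \overline{\pi}_i(\tau(t); \overline{p}_i)$ evaluated at $t_{i-1}$ and $t_i$. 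Hence the constraint $\overline{A}\,\overline{p} = d$ in \eqref{form3} carries the same interpretation as $A p = d$ in \eqref{form1}, and since $\Lambda$ and $\Pi$ act directly on $d$, the continuity constraint $\Lambda d = 0$ and the boundary constraint $\Pi d = \beta$ agree between the two programs.

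With the constraints reconciled, the lemma reduces to a per-segment claim about the optimal $p_i^\star$ and $\overline{p}_i^\star$ associated with a common $d^\star$. For any fixed feasible $d$, the program \eqref{form1} decouples into $k$ independent subproblems minimizing $p_i^T Q_i p_i$ subject to $A_i p_i = [d_{i,t_{i-1}}^T, d_{i,t_i}^T]^T$, and similarly \eqref{form3} decouples into subproblems minimizing $\overline{p}_i^T Q_{\pm 1}\overline{p}_i$ subject to $\Gamma_i A_{\pm 1}\overline{p}_i = [d_{i,t_{i-1}}^T, d_{i,t_i}^T]^T$. Proposition 2, with the scaling relations \eqref{scaling_constraints} matched against the $\Gamma_i$ constraint, asserts that the two per-segment minimizers are related by $\pi_i(t; p_i^\star) = \overline{\pi}_i(\tau(t); \overline{p}_i^\star)$. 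Stitching these identities together for $i = 1,\ldots,k$ gives the piecewise representation stated in the lemma.

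The hard part will be justifying that the outer minimizations over $d$ in \eqref{form1} and \eqref{form3} select the same $d^\star$. For corresponding polynomials one obtains $p_i^T Q_i p_i = \Delta_i^{-7}\,\overline{p}_i^T Q_{\pm 1}\overline{p}_i$ by the same change-of-variable calculation used in the proof of Proposition 2, so after eliminating $p$ and $\overline{p}$ the two reduced costs in $d$ differ by segment-dependent factors $\Delta_i^{-7}$. I would address this either by showing that the KKT stationarity conditions on the free components of $d$ (the entries of $f$ in \eqref{proj1}) are insensitive to a per-segment positive rescaling of the cost, by rescaling the Lagrange multipliers accordingly, or, if the intended setting assumes uniform segment lengths, by noting that $\Delta_i^{-7}$ becomes a global constant and drops out of the argmin. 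Once the common $d^\star$ is secured, the piecewise identity from the second paragraph delivers the lemma.
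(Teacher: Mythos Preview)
Your outline tracks the paper's proof --- invoke the single-segment proposition after checking that \eqref{form1} and \eqref{form3} impose the same constraints on $d$ --- but you are considerably more careful, and in being so you have correctly flagged an issue the paper simply does not address: whether the two programs select the same optimal $d^\star$. The paper's proof asserts that the per-segment proposition is ``satisfied by construction'' and stops; it never confronts the $\Delta_i^{-7}$ weights.

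Your first proposed resolution, rescaling the Lagrange multipliers, does not go through when the $\Delta_i$ differ. After eliminating $p$ and $\overline{p}$ the reduced Hessians in $d$ satisfy $A_i^{-T}Q_iA_i^{-1}=\Delta_i^{-7}(\Gamma_iA_{\pm1})^{-T}Q_{\pm1}(\Gamma_iA_{\pm1})^{-1}$ block by block, so the stationarity row-block for segment $i$ in \eqref{form3} equals $\Delta_i^{7}$ times its counterpart in \eqref{form1} plus the multiplier term $\Sigma_i^T\bigl[-\lambda_{i-1}^T,\ \lambda_i^T\bigr]^T$ coming from $\Sigma^T\Lambda^T\lambda$. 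Matching the two KKT systems at a common $f$ then forces both $\lambda_i^{(3)}=\Delta_i^{7}\lambda_i^{(1)}$ (from the segment-$i$ block) and $\lambda_i^{(3)}=\Delta_{i+1}^{7}\lambda_i^{(1)}$ (from the segment-$(i{+}1)$ block), which is inconsistent unless $\Delta_i=\Delta_{i+1}$. In general \eqref{form1} and \eqref{form3} are genuinely different QPs in $d$, with \eqref{form1} penalising short segments more heavily. Your second option --- uniform segment durations --- does close the argument, and the paper's numerical examples all happen to satisfy it, but neither the lemma's statement nor its proof records that hypothesis. If non-uniform $\Delta_i$ are intended, the clean repair is to carry the weights into the nondimensional cost, taking $\overline{Q}=\mathrm{diag}\{\Delta_1^{-7}Q_{\pm1},\dots,\Delta_k^{-7}Q_{\pm1}\}$; with that modification your first two paragraphs complete the proof as written.
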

    \vspace{1em}
    \begin{proof}
        Temporarily omitting the constraints \eqref{form1:con2} and \eqref{form1:con3}, the program in the nondimensional form is
        % 	We formulate the program with every segment in the nondimensional form \eqref{prob1}, temporarily omitting the constraints \eqref{form1:con2} and \eqref{form1:con3}
    	\begin{subequations} \label{form4}
    		\begin{gather}
    		\min_{\overline{p},d} \quad \overline{p}^T\overline{Q}\overline{p},  \\
    		\text{s.t.} \quad \overline{A}'\overline{p}=\overline{d}, \label{form4:con1}
    		\end{gather}
    		\end{subequations}
        where $\overline{A}'=\mathrm{diag}\{A_{\pm1},\ldots,A_{\pm1}\}\in\mathbb{R}^{kn\times kn}$ is constructed from elements $A_{\pm1}\in\mathbb{R}^{n\times n}$. We will next substitute \eqref{scaling_constraints} into \eqref{form4} to satisfy Proposition \ref{prop}. Stacking the derivatives $d_{i,j,t}$ and $\overline{d}_{i,j,t}$ for $j=0, \ldots,s-1$, the conditions \eqref{scaling_constraints} can be expressed as the matrix equation $\Gamma_i\overline{d}_i= d_i$.
    % 	\begin{align}\label{scaling_constraints2}
    % 	\Gamma_i\overline{d}_i&= d_i.
    % 	\end{align}
    	Substituting this expression into \eqref{form4:con1} yields \eqref{form3:con1}. We introduce the constraints \eqref{form3:con2} and \eqref{form3:con3} to constrain the dimensional derivatives as desired. The conditions \eqref{scaling_constraints} of Proposition \ref{prop} are satisfied by construction, hence the solution to \eqref{form3}, $\overline{p}^*(t)$, can be used to calculated the solution to \eqref{form1} using Lemma \ref{scaled_shifted}.
    \end{proof}
    
    \begin{rema}\label{rem:synergy}
        Replacing the original QP \eqref{form1} with the better conditioned \eqref{form3} in Problem \ref{prog1}, Algorithm \ref{algo1} can still be used for solution. With $A_i=\Gamma_i A_{\pm 1}$, $Q_i=Q_{\pm 1}$ and $\Sigma_i$, $b_{i,t_{i-1}}$ and $b_{i,t_i}$ as defined, Algorithm \ref{algo1} calculates $\overline{p}$. Then  the solution to Problem \ref{prog1}, $\pi(t)$, can be calculated using \eqref{scaled_shifted}.
    \end{rema}
    
    \begin{table}[t]
        \begin{center}
            \caption{Comparing the condition number of the transposed confluent Vandermonde matrices of \eqref{form1} and \eqref{form3} for two second segments with different times $t_{i-1}$ and $t_i$.}
            \begin{threeparttable}
                \begin{tabular}{rlcc}
                    \hline
                    &   & $\kappa(A_i)$ & $\kappa(\Gamma_i A_{\pm 1})$ \\
                    \hline
                    &$t_{i-1}=-1,~t_i=1$  & $2.3417 \cdot 10^{4}$ & $2.3417 \cdot 10^{4}$ \\
                    &$t_{i-1}=1,~t_i=3$ & $1.1775\cdot 10^{9}$ & $2.3417 \cdot 10^{4}$ \\
                    &$t_{i-1}=10,~t_i=12$  & $2.3763\cdot 10^{17}$ & $2.3417 \cdot 10^{4}$ \\
                    &$t_{i-1}=100,~t_i=102$ & $6.1296\cdot 10^{28}$ & $2.3417 \cdot 10^{4}$ \\
                    \hline
                \end{tabular}
            % \begin{tablenotes}
            %     \item[a] for the abstraction reaction, $\fam0 Mu+HX \rightarrow MuH+X$.
            %     \item[b] 1 degree${} = \pi/180$ radians.
            % \end{tablenotes}
            \end{threeparttable}
        \end{center}
        \label{conditioning}
    \end{table}
    
    \begin{figure}[t]
    	\centering
    	{\includegraphics[width=0.98\linewidth]{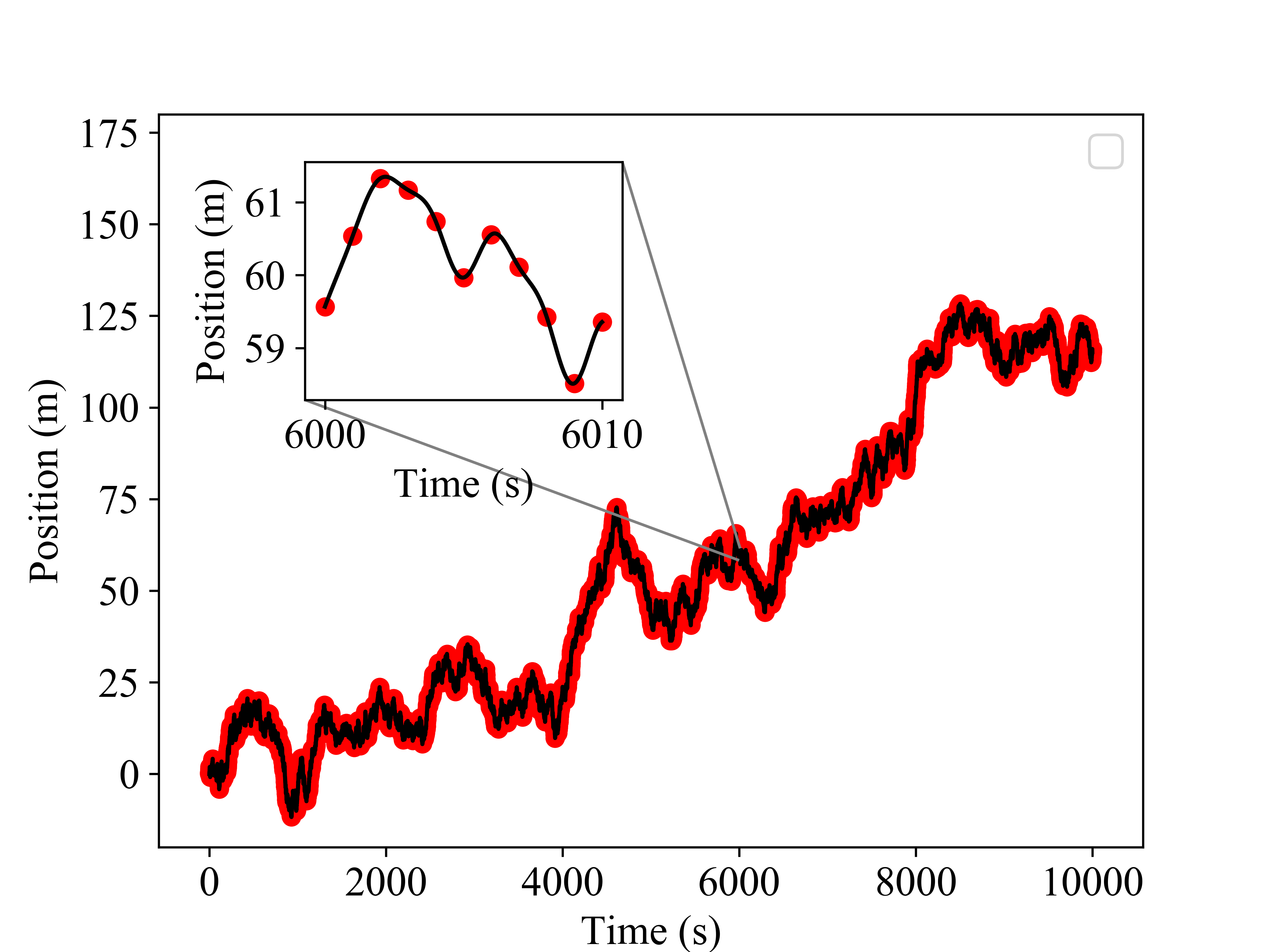}}
    	\caption{A spline of $10000$ segments generated by Algorithm \ref{algo1} with the nondimensional program \eqref{form3}. The problem was parameterized with positions from a random walk where each step $k$ is taken at time $k$ seconds and is of length taken uniformly from $[-1,1]$ in meters. The parametrized positions are plotted as red dots and the spline as the solid black line.}
    	\label{random_walk} 
    \end{figure}
    
    The improvement of the condition number of the reformulated $\Gamma_iA_{\pm1}$ compared to the original $A_i$ is demonstrated by Table \ref{conditioning}. The condition number $\kappa(A_i)$ rapidly grows to more than $10^{17}$ for $t_{i-1}=10$ and $t_i=12$, where the matrices become practically unusable in calculations given numerical errors \cite{trefethen1997numerical}. % If Problem \ref{prog1} is parametrized with a flight duration greater than $t_k=10$, the constraint matrices $A_i$ will exhibit condition numbers of similar magnitudes. 
    This helps in explaining the instability observed in generating large trajectories, as the $t_i$ increase with the number of segments. We also note the $\kappa(\Gamma_iA_{\pm1})$ remain constant in Table \ref{conditioning}, due to constant $\Delta_i=1$. The $t_{i-1}$ and $t_i$ only appear in $\Gamma_iA_{\pm1}$ as part of the difference terms $\Delta_i$. Hence, $\kappa(\Gamma_iA_{\pm1})$ does not increase with $t_{i-1}$ and is only prone to introducing numerical error for large $\Delta_i$.
    
    \section{EXPERIMENTAL RESULTS}\label{sec::experiments}
    
    \begin{figure}[tp]
    	\centering
    	{\includegraphics[width=0.7\linewidth]{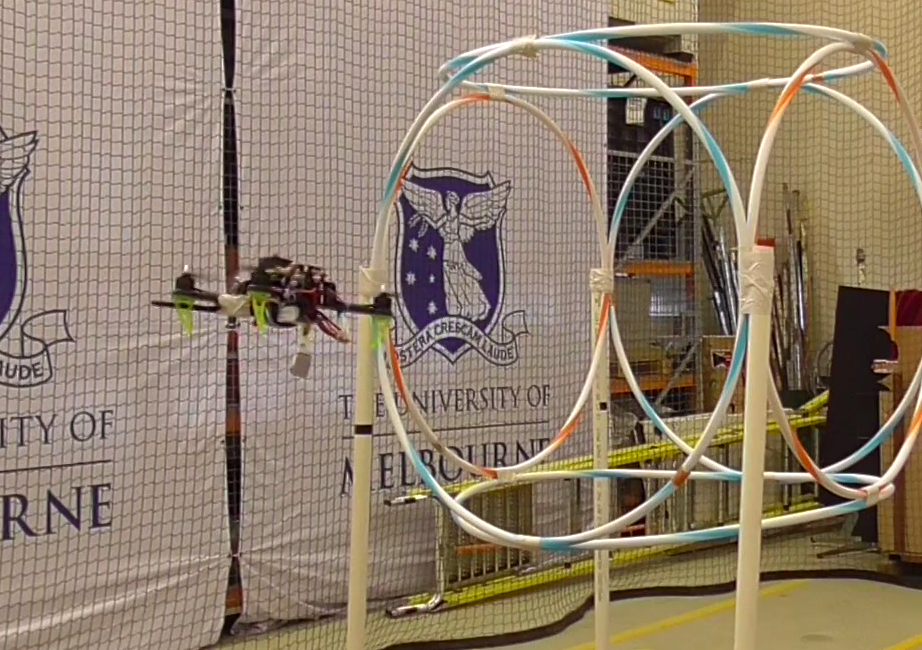}}
    	\caption{Snapshot of the quadrotor flying through hoops as part of a long trajectory. The total flight time was more than $2$ min and the trajectory was calculated offline in $0.3319$ s.}
    	\label{still} 
    \end{figure}
    
    \begin{figure}[tp]
    	\centering
    	{\includegraphics[width=0.95\linewidth]{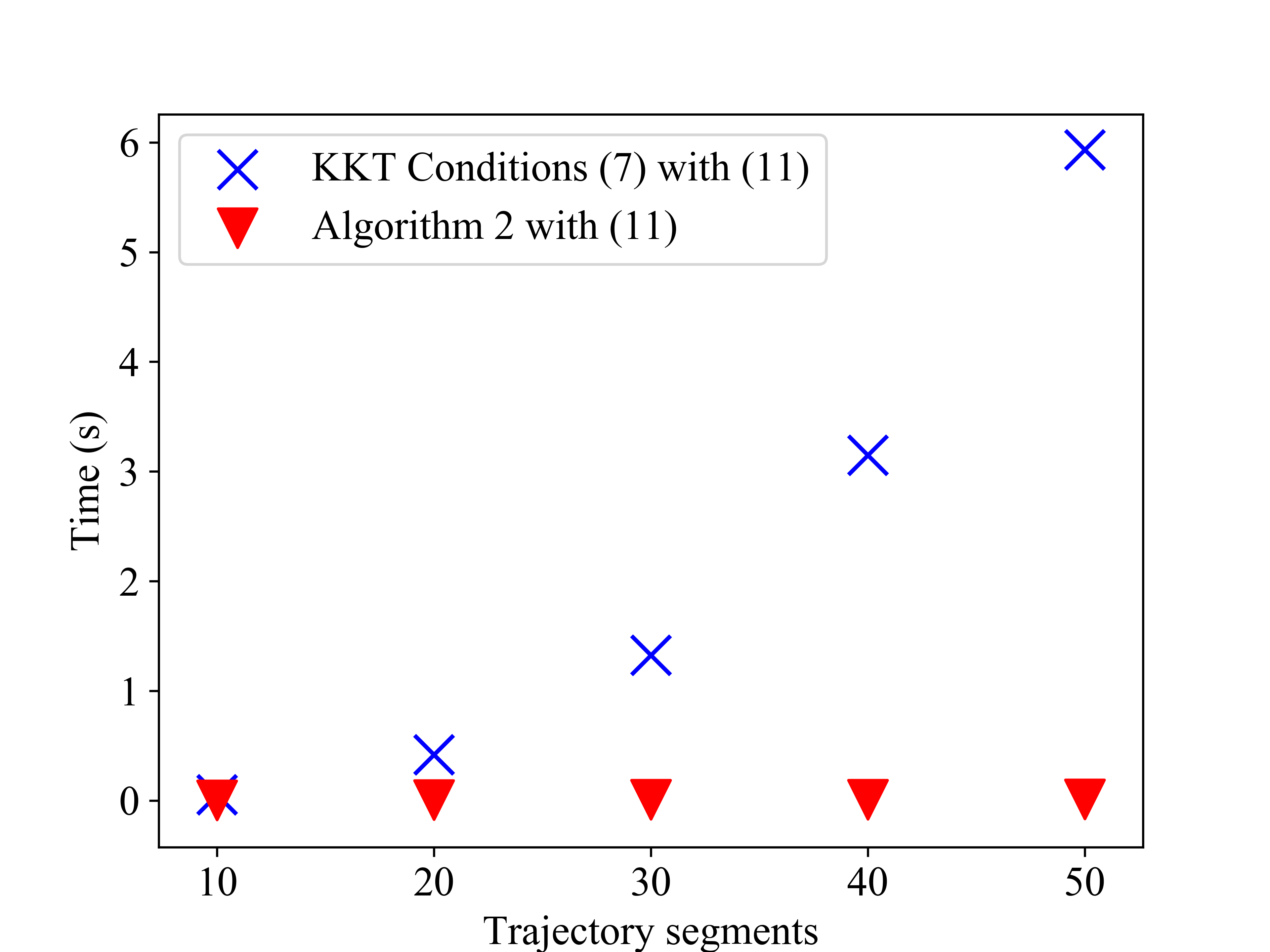}}
    	\caption{The computation time required to generate trajectories for a range of $k$ segments of Algorithm \ref{algo1} (triangles) and by solving the equations from the KKT conditions \eqref{KKT} (crosses).}
    	\label{compcomp} 
    \end{figure}
    
    % \begin{figure}[tp]
    % 	\centering
    % 	{\includegraphics[width=0.98\linewidth]{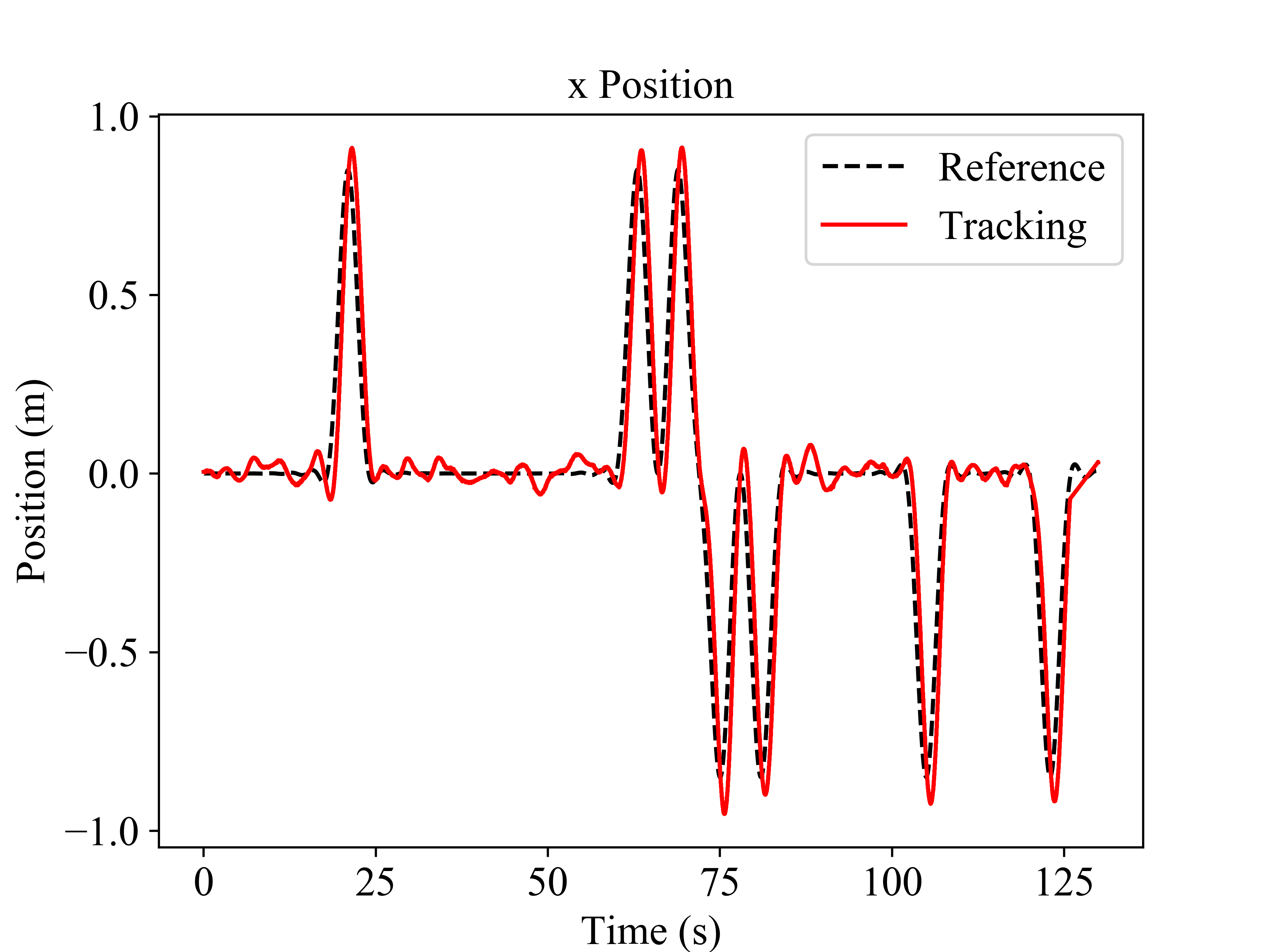}}
    % 	\caption{The quadrotor's position in $x$ for a $2$ min long trajectory. }
    % 	\label{tracking} 
    % \end{figure}
    
    \subsection{Large Scale Trajectory Generation}
    
    We demonstrated the performance of Algorithm \ref{algo1} with the nondimensional program \eqref{form3} by generating a minimum-snap trajectory of a large number of segments. Implementing Algorithm \ref{algo1} with \eqref{form3} in C++, we were able to generate a spline of more than $500000$ segments in $156.01$ s. The algorithm was implemented and run on a laptop with an Intel Core i7-8650U CPU running at 1.9 GHz, with 16 GB of RAM. The limitation of algorithm was memory allocation on the computer, and the implementation was not optimized to increase performance. 
    
    \subsection{Long Quadrotor Flight}
    
    We conducted an experiment to demonstrate the feasibility of the minimum-snap trajectories generated by Algorithm \ref{algo1} with \eqref{form3}. A trajectory was generated such that a quadrotor repeated excursions from the centre of a virtual cube, which was represented in the flight space with hoops as pictured in Fig. \ref{still}. The trajectory had more than $80$ segments with a flight time greater than $2$ min. A spline was calculated offline in $0.3319$ s and used as a position reference, evaluated using a ground station computer's clock and transmitted to the vehicle at $20$ Hz.
    
    For the experimental testbed, we flew a quadrotor with generic components and a Pixhawk2 The Cube Black \cite{PX4DevTeam2019} flight controller. We use a Vicon motion capture system \cite{ViconMotionSystemsLtdUK2020} and the flight controller's onboard sensors for sensing. We leveraged the PX4 firmware \cite{PX4DevTeam2020} for control and estimation.
    
    % For the problem parameterization, the trajectory is specified as a repeated excursion from the centre of a virtual cube, through a random face of the cube and back to the center. Hoops were used in the flight space to represent the faces of the virtual cube, as pictured in Fig. \ref{still}. An example of tracking performance of the minimum-snap trajectory is included in Fig. \ref{tracking}.
    
    \subsection{Computation Time Comparison}
    
    Finally, we performed a series of benchmark tests to demonstrate the performance of Algorithm \ref{algo1} with \eqref{form3} compared with solving the equations from the KKT conditions \eqref{KKT}. This method was chosen for comparison given it is of similar computational complexity in $k$ as the method proposed by Richter {\it et al.} \cite{Richter2016}. The reformulated optimization program \eqref{form3} was used instead of the original program \eqref{form1} in order to compare performance in calculating a range of $k$ segments without numerical instability. Fig. \ref{compcomp} shows the linear and cubic computation time of Algorithm \ref{algo1} and the solution of \eqref{KKT}, respectively. The computation time of Algorithm \ref{algo1} ranges from $2.56$ to $12.41$ ms for calculating splines of $10$ to $50$ segments, whereas it takes $5.93$ s to calculate a $50$ segment spline by solving \eqref{KKT}.

    \section{Conclusions and Future Works}\label{sec::conclusion}
    
    In this paper we have presented a new algorithm to generate minimum-snap spline trajectories for quadrotors in linear computational complexity in the number of segments. We also proposed a reformulation of the associated optimization program with better conditioned matrices than the original program. With these two developments, we are able to generate large trajectories not limited by computational time or number of segments but instead by computer memory. The performance of the proposed algorithm furthers the applicability of minimum-snap trajectory planning for real-time applications, enabling the faster calculation of larger trajectories onboard quadrotors.
    
    In a sequel, we will explore further applications enabled by the algorithm's computational complexity. For example, the algorithm can be used as a fast oracle for zeroth-order optimization algorithms. Such approaches have been suggested to minimize the snap of a trajectory by optimizing the time parameterization $T$. % Another application is in the solution of mixed-integer quadratic programs, where an efficient method of solving underlying QPs significantly improves the computation time of branch and bound methods.  
    
    \appendix{}
    
    \subsection{Constructing Hessian Matrices}\label{sec:const}
    There are two possible constructions of the Hessian matrices $Q_i(T)\in\mathbb{R}^{nk\times nk}$ in \eqref{form1}, corresponding to $\pi_i(t;p_i)=x_i(t),~y_i(t)$ or $z_i(t)$, and $\pi_i(t;p_i)=\psi_i(t)$. Both are constructed by calculating the quadratic form that results from the following integral in representing $\pi_i(t;p_i)$ as a vector of coefficients $p_i$
    \begin{align*}
        \int_{t_0}^{t_1}\frac{d^{s-1}\pi_i(\tau;p_i)}{d\tau^{s-1}}^2d\tau&=p_i^TQ_i(T)p_i,
    \end{align*}
    where $s=5$ for $\pi_i(t;p_i)=x_i(t),~y_i(t)$ or $z_i(t)$, such that the integrand is the square of the snap of position, or $s=3$ for $\pi_i(t;p_i)=\psi_i(t)$, such that the integrand is the square of the acceleration of position.

    \subsection{Proof of Proposition 3}\label{sec:proof}
    Before stating the proof, we first formulate \eqref{form1} so that the desirable structure of the matrices in the program is revealed. From this reformulation, we obtain an efficient method in a similar fashion as Cantoni {\it et al.} \cite{cantoni2020structured}, from which we derive the steps of Algorithm \ref{algo1}.
    % \begin{figure*}[!t]
    % 	\normalsize
    % 	%        \setcounter{MYtempeqncnt}{\value{equation}}
    % 	% Set the equation number to one less than the one
    % 	% desired for the first equation here.
    % 	% The value here will have to changed if equations
    % 	% are added or removed prior to the place these
    % 	% equations are referenced in the main text.
    % 	% \setcounter{equation}{5}
    	
    % 	\begin{align}
    % 	\begin{bmatrix}
    % 	R_1 &  -\Psi^T & & & \\
    % 	-\Psi & R_2 & -\Psi^T & & \\
    % 	& -\Psi & R_3& -\Psi^T  &  \\
    % 	& & & \ddots & \\
    % 	& & & -\Psi & R_k
    % 	\end{bmatrix}&=\begin{bmatrix}
    % 	I &   & &  & \\
    % 	\Phi_1 & I & & & \\
    % 	& \Phi_2 & I & & \\
    % 	& & & \ddots & & \\
    % 	& & & \Phi_{k-1} & I
    % 	\end{bmatrix}\begin{bmatrix}
    % 	R_1 &  -\Psi^T & & & \\
    % 	& \overline{R}_2 & -\Psi^T & & \\
    % 	& & \overline{R}_3 & -\Psi^T & \\
    % 	& & & \ddots & & \\
    % 	& & & & \overline{R}_k
    % 	\end{bmatrix} \label{LU}
    % 	\end{align}
    % 	% % Restore the current equation number.
    % 	% \setcounter{equation}{\value{MYtempeqncnt}}
    % 	% IEEE uses as a separator
    % 	% \vspace{1em}
    % 	\hrulefill
    % 	% The spacer can be tweaked to stop underfull vboxes.
    % 	% \vspace*{1em}
    % \end{figure*}
    
    % \subsubsection{Reformulating the Problem}
    Similar to Richter {\it et al.} \cite{Richter2016}, we reformulate \eqref{form1} by replacing \eqref{form1:con1} with $p=A^{-1}d$. % as the following.
    % \begin{subequations}\label{form2}
    %     \begin{align}
    %         \min_{d} \quad &d^TA^{-T}QA^{-1}d, \label{cost}\\
    %         \text{s.t. } \quad & \Lambda d =0, \label{coupling} \\
    %         \quad & \Pi d = \beta. \label{fixed}
    %     \end{align}
    % \end{subequations}
    We further replace \eqref{form1:con3} with \eqref{proj1}. By construction $\Pi d=\Pi (b +  \Sigma f )= \beta$, so we substitute \eqref{proj1} into \eqref{form1} and are left with
    \begin{subequations}\label{reform2}
        \begin{align}
        \min_{f} \quad &f^T\Sigma^TA^{-T}QA^{-1}b + \frac{1}{2}f^T\Sigma^TA^{-T}QA^{-1}\Sigma f, \label{cost2} \\
        \text{s.t. } \quad & \Lambda \Sigma f =0.
    \end{align}
    \end{subequations}
    
    % Cast as \eqref{reform2}, the Hessian $\Sigma^TA^{-T}QA^{-1}\Sigma$ is block diagonal and the decision variables are coupled by the sparse matrix $\Lambda \Sigma$. Furthermore, the coupling is only between the derivatives of adjacent segments. Motivated by these observations, we now derive an efficient method that exploits this structure rather than using dense computations. This method is used to derive the calculations in Algorithm \ref{algo1}.
    
    % \subsubsection{A Structured Solution of the KKT}
    The KKT conditions for \eqref{reform2} yield
    \begin{align}\label{KKT}
        \begin{bmatrix}
        \Sigma^TA^{-T}QA^{-1}\Sigma & \Sigma^T\Lambda^T \\
        \Lambda\Sigma & 0
        \end{bmatrix}\begin{bmatrix}
        f \\ \lambda
        \end{bmatrix} &= \begin{bmatrix}
       \Sigma^TA^{-T}QA^{-1}b \\ 0
        \end{bmatrix},
    \end{align}
    where the Lagrangian multipliers are $\lambda = [\lambda_1^T, \dots, \lambda_{k-1}^T]^T$. % and $\lambda_i\in \mathbb{R}^s$. % We partition $g$ the same as $f$, $g=[g_{1,0}^T,g_{1,T}^T, \ldots,g_{k,0}^T,g_{k,T}^T]^T$.
 
    Choosing the partition of variables as \eqref{partition}, we may permute the variables and columns of \eqref{KKT} to reveal the block-tridiagonal structure
    \begin{align}\label{tridiag}
        \underbrace{ \begin{bmatrix}
        R_1 &  -\Psi^T & & \\
        -\Psi & R_2 & -\Psi^T & \\
        & -\Psi & R_3 & \hdots\\
        & & \vdots & \ddots
        \end{bmatrix}}_{R}\begin{bmatrix}
        y_1 \\ y_2 \\ y_3 \\ \vdots 
        \end{bmatrix} &= \begin{bmatrix}
        c_1 \\ c_2 \\ c_3 \\ \vdots 
        \end{bmatrix}, 
        \end{align}
        where 
        \begin{align*}
            R_i=\begin{bmatrix}
            B_i & C_i & 0 \\
            C_i^T & D_i & I \\
            0 & I & 0
            \end{bmatrix}, \quad
            \Psi= \begin{bmatrix}
            0 & 0 & I \\
            0 & 0 & 0 \\
            0 & 0 & 0
            \end{bmatrix},
    \end{align*}
    and $y_i=[f_{i,t_{i-1}}^T, f_{i,t_i}^T, \lambda_i^T]^T$ and $c_i=[g_{i,t_{i-1}}^T, g_{i,t_i}^T, 0]^T$.
    
    Block-tridiagonal matrices such as \eqref{tridiag} are commonly solved through Block LU factorisation \cite{Fox1969}. Derived from this factorisation, the following set of recursions solve \eqref{KKT}. In satisfying the optimality conditions for \eqref{form1}, the following is our efficient method for solving the optimization program in Problem \ref{prog1}.
    
    \begin{lemm}\label{structured_sol}
    For $i=1\ldots,k$, let $f_{i,t_{i-1}},f_{i,t_{i}}$ and $\lambda_i$ satisfy 
    \vspace{-1em}
        \begin{subequations}\label{recs}
            \begin{align}
            f_{i,t_{i-1}}
            &=
            \overline{B}^{-1}_i(\overline{g}_{i,t_{i-1}}-C_if_{i,t_{i}}), \label{rec1} \\
            f_{i,t_{i}}
            &=\begin{cases}
            \begin{aligned}
            & (D_k-C_k^T\overline{B}_k^{-1}C_k)^{-1}  \\
            & \quad (g_{k,t_{k}}-C_k^T\overline{B}_k^{-1}\overline{g}_{k,t_{k-1}})
            \end{aligned} 
            & i=k, \\
            f_{i+1,t_{i}} & i=k-1, \ldots,1,
            \end{cases} \label{rec2} \\
             g_{i,t_{i}} &= C_i^T f_{i,t_{i-1}} + D_i f_{i,t_i} + \lambda_i, \label{rec3}
            \end{align}
        \end{subequations}
        where
        \begin{subequations}
            \begin{align}
            \overline{B}_i &=\begin{cases}
            B_1 & i=1, \\
            B_{i}+D_{i-1}-C_{i-1}^TB_{i-1}^{-1}C_{i-1} & i=2,\ldots,k,
            \end{cases} \label{con1}\\
            \overline{g}_{i,t_{i-1}}&=\begin{cases}
            {g}_{1,t_{i-1}} & i=1, \\
            \begin{aligned}
            & g_{i,t_{i-1}} + g_{i-1,t_{i-1}} \\
            & \quad -C_{i-1}^TB_{i-1}^{-1}\overline{g}_{i-1,t_{i-1}}
            \end{aligned}
             & i=2,\ldots,k.
            \end{cases} \label{con2}
            \end{align}
        \end{subequations}
        Then \eqref{recs} solves \eqref{KKT}.
    \end{lemm}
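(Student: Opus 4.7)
The plan is to derive the recursions \eqref{recs} by performing a block Thomas algorithm (equivalently, block LU factorization) on the tridiagonal system \eqref{tridiag}. Since \eqref{tridiag} is just a permutation of \eqref{KKT}, any solution of \eqref{recs} that satisfies \eqref{tridiag} will also satisfy \eqref{KKT}. The key structural fact I would exploit throughout is that $\Psi$ has only a single nonzero (identity) block in its upper-right corner, so the Schur-complement update $R_i - \Psi U_{i-1}^{-1}\Psi^T$ at each elimination step only alters the $(1,1)$ block of the next diagonal and only the first component of the corresponding right-hand side.

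First, I would expand the generic block equation $-\Psi y_{i-1} + R_i y_i - \Psi^T y_{i+1} = c_i$ into its three block-row components. The third block row immediately reads $f_{i,t_i} = f_{i+1,t_i}$ for $i<k$, giving the continuity branch of \eqref{rec2}, and the second block row reads $C_i^T f_{i,t_{i-1}} + D_i f_{i,t_i} + \lambda_i = g_{i,t_i}$, which is exactly \eqref{rec3}. The first block row couples $f_{i,t_{i-1}}$, $f_{i,t_i}$ and the previous multiplier $\lambda_{i-1}$ and is the one requiring the forward sweep. The sweep uses the reduced equation at step $i-1$ to express $\lambda_{i-1}$ in terms of $f_{i-1,t_{i-2}}$ and $f_{i-1,t_{i-1}}$ and substitutes it into the first block row at step $i$; by induction on $i$, the accumulated corrections to the coefficient of $f_{i,t_{i-1}}$ and to the right-hand side are precisely the updates in \eqref{con1} and \eqref{con2}, producing a reduced equation of the form $\overline{B}_i f_{i,t_{i-1}} + C_i f_{i,t_i} = \overline{g}_{i,t_{i-1}}$.

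Next, I would carry out the back substitution. At $i=k$ there is no $y_{k+1}$, so the reduced first row together with the unaltered second row forms a $2\times 2$ block system in $f_{k,t_{k-1}}$ and $f_{k,t_k}$; its Schur complement in $f_{k,t_k}$ yields exactly the $i=k$ branch of \eqref{rec2}. For $i<k$, the continuity relation sets $f_{i,t_i}=f_{i+1,t_i}$, the reduced first row then gives \eqref{rec1}, and the original second row delivers $\lambda_i$ through \eqref{rec3}. Reading these steps backward from $i=k$ down to $i=1$ reconstructs $y_1,\ldots,y_k$ satisfying \eqref{tridiag}, which completes the argument.

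The main obstacle I anticipate is the bookkeeping of the forward sweep: one must verify that successive Schur-complement updates leave the $(1,2)$, $(2,1)$, $(2,2)$ and multiplier-coupling blocks of each $R_i$ unchanged, so that the inductive form presupposed by \eqref{con1}-\eqref{con2} is genuinely preserved from one step to the next. This reduces to checking a single identity-block pattern in $\Psi U_{i-1}^{-1}\Psi^T$, but it has to be carried out explicitly on the $3\times 3$ block structure of $R_i$ to pin down the precise form of the $\overline{B}_i$ and $\overline{g}_{i,t_{i-1}}$ appearing in the statement.
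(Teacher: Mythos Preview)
Your approach is essentially the same as the paper's: both carry out a block LU (Thomas) sweep on the permuted tridiagonal system \eqref{tridiag}, exploiting the sparsity of $\Psi$ so that each Schur-complement update touches only the $(1,1)$ block and the first right-hand-side component, and then back-substitute to recover \eqref{rec1}--\eqref{rec3}. The paper simply writes down the factors $L$ and $U$ explicitly, while you describe the same elimination at the level of the three block rows; the content is identical.
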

    % \vspace{1em}
    \begin{proof}
        Let the LU factorisation of the block tridiagonal matrix $R$ in \eqref{tridiag} be $R=LU$ as
        \begin{align*}
    	R &=\begin{bmatrix}
    	I &   & & \\
    	\Phi_1 & I & & \\
    	& \Phi_2 & & \\
    	& & \ddots & & \\
    	& & \Phi_{k-1} & I
    	\end{bmatrix}\begin{bmatrix}
    	R_1 & -\Psi^T & & \\
    	& \overline{R}_2 & -\Psi^T & \\
    	& & \overline{R}_3 & \\
    	& & \ddots & & \\
    	& & & \overline{R}_k
    	\end{bmatrix},
    	\end{align*}
        with the block elements for $i=1,\ldots,k-1$
        \begin{align*}
            \Phi_i&=\begin{bmatrix}
            C_i^TB_i^{-1} & -I  & D_i-C_i^TB_i^{-1}C_i \\
            0 & 0 & 0  \\
            0 & 0 & 0  
            \end{bmatrix},
        \intertext{and for $i=2,\ldots,k$}
            \overline{R}_i&=\begin{bmatrix}
            B_i+D_{i-1}-C_{i-1}^TB_{i-1}^{-1}C_{i-1} & C_i & 0 \\
            C_i^T & D_i & I \\
            0 & I & 0
            \end{bmatrix}.
        \end{align*}
    Solving $Lx=c$ yields the iteration \eqref{con2}. Calculating each $B_i+D_{i-1}-C_{i-1}^TB_{i-1}^{-1}C_{i-1}$ gives rise to \eqref{con1}. These matrices are used to solve $Uy=x$, providing expressions for $f_{i,t_{i-1}}$ and $f_{i,t_{i}}$ as \eqref{rec1} and \eqref{rec2}. The solution to $Uy=x$ also governs the values of $\lambda_i$ with \eqref{rec3}.
    \end{proof}
    Lemma \ref{structured_sol} can be used to solve Problem \ref{prog1} with linear computational complexity in $k$.  There are $2k$ matrix calculations required in computing \eqref{con1} and \eqref{con2}, while $2k$ systems of equations need to be solved in \eqref{rec1} and \eqref{rec2}. All the matrices involved are square and at largest $s$, and the computational complexity is then $O(\frac{4}{3}ks^3)$.

    \bibliographystyle{IEEEtran}
    \bibliography{root}

\end{document}